\theoremstyle{plain}
\newtheorem{thm}{Theorem}[section] 
\newtheorem{proposition}[thm]{Proposition}
\newcommand{\R}{{\mathbb{R}}}
\newcommand{\id}{{\mathbbm{1}}} 
\newcommand{\Exp}[2][]{\mathbb{E}{%
	\ifthenelse{\isempty{#1}}{}{_{#1}}}%
	\left[{#2}\right]} 
\newcommand{\cleansum}{\displaystyle \sum} 
\newcommand{\norm}[1]{\|{#1}\|}
\newcommand{\Norm}[1]{\left\|{#1}\right\|}
\newcommand{\Abs}[1]{\left|{#1}\right|}
\newcommand{\ketbra}[2]{\ket{#1}\!\!\bra{#2}}
\newcommand{\proj}[1]{\ketbra{{#1}}{{#1}}}
\renewcommand{\O}{\mathcal{O}}
\newcommand{\curly}[1]{\mathcal{{#1}}} 
\newcommand{\polylog}{\text{poly}\log}
\newcommand\blfootnote[1]{%
	\begingroup
	\renewcommand\thefootnote{}\footnote{#1}%
	\addtocounter{footnote}{-1}%
	\endgroup
}
\title{Spectral sparsification of matrix inputs as a preprocessing step for quantum algorithms}
\author[1]{Sathyawageeswar Subramanian}
\author[2,3]{Steven Herbert}
\affil[3]{\small{\textit{Department of Applied Mathematics and Theoretical Physics, University of Cambridge,  UK}}}
\affil[2]{\small{\textit{Department of Computer Science, University of Oxford, UK}}}
\affil[3]{\small{\textit{Department of Computer Science and Technology, University of Cambridge,  UK}}}
\date{}
\begin{document}
\maketitle

\begin{abstract}
\noindent We study the potential utility of classical techniques of spectral sparsification of graphs as a preprocessing step for digital quantum algorithms, in particular, for Hamiltonian simulation. Our results indicate that spectral sparsification of a graph with $n$ nodes through a sampling method, e.g.\ as in \cite{Spielman2011resistances} using effective resistances, gives, with high probability, a locally computable matrix $\tilde H$ with row sparsity at most $\O(\text{poly}\log n)$. For a symmetric matrix $H$ of size $n$ with $m$ non-zero entries, a one-time classical runtime overhead of $\O(m\norm{H}t\log n/\epsilon)$ expended in spectral sparsification is then found to be useful as a way to obtain a sparse matrix $\tilde H$ that can be used to approximate time evolution $e^{itH}$ under the Hamiltonian $H$ to precision $\epsilon$. Once such a sparsifier is obtained, it could be used with a variety of quantum algorithms in the query model that make crucial use of row sparsity. We focus on the case of efficient quantum algorithms for sparse Hamiltonian simulation, since Hamiltonian simulation underlies, as a key subroutine, several quantum algorithms, including quantum phase estimation and recent ones for linear algebra. Finally, we also give two simple quantum algorithms to estimate the row sparsity of an input matrix, which achieve a query complexity of $\O(n^{3/2})$ as opposed to $\O(n^2)$ that would be required by any classical algorithm for the task.
\end{abstract}

\section{Introduction}

In\blfootnote{$^1$\href{mailto:ss2310@cam.ac.uk}{\texttt{ss2310@cam.ac.uk}}, $ ^{2,3}$\href{mailto:sjh227@cam.ac.uk}{\texttt{sjh227@cam.ac.uk}}} classical graph theory and signal processing, sparsifying dense matrices and performing algorithms thereon to reduce the computational load is a key idea, which has received significant attention over the past several years. For instance, in the case where the matrix is a graph adjacency matrix, one of the primary motivations for spectral sparsification of the graph Laplacian in classical computer science is its utility for attacking cut problems \cite{Batson2013sparsification,Batson2014Ramanujan}, and as a result the spectral properties of the sparsified graph's Laplacian are frequently treated as the measure of interest to retain while reducing the number of edges in the graph. Sparsification has also found use in designing preconditioners for linear system solvers \cite{Spielman2011resistances,Spielman2011,Spielmanm2014}, and in studying constraint satisfaction problems \cite{Dinur2006}.\\
\indent Recently, the idea of transferring the wealth of results on the benefits of sparsity from classical signal processing to quantum algorithms, in particular with regards to Hamiltonian simulation, has been investigated in \cite{Aharonov2018sparsification}; their results for analog simulation indicate that degree reduction and edge dilution does not work in the quantum setting for general graphs. On the other hand, digital quantum simulation may also benefit from sparsification, and this idea is yet to be explored to the best of our knowledge. For Hamiltonian simulation, we are ultimately interested in approximating the evolution $U(H,t)=e^{-iHt}$ under a Hermitian matrix $H$. That is, given any state $\ket{\psi}$, we want to approximate the state $U(H,t)\ket{\psi}$, which can be done by approximating $U$ itself in spectral norm. We should then study and bound
\begin{equation}
    \sup_{\ket{\psi}\in\curly H}\Norm{U(H,t)\ket{\psi}-U(H',t)\ket{\psi}},
\end{equation}
where $H'$ is a spectral sparsifier for $H$. This quatity is simply the spectral norm of the difference operator $e^{-iHt}-e^{-iH't}$, but the form above reminds us that we also need to keep track of the way the eigenspaces change in the process of sparsification. \\
\indent In this paper we address three issues with transferring the classical sparsity results to the quantum setting directly: firstly, the classical sparsification techniques typically yield matrices that are sparse overall, but the quantum algorithm for Hamiltonian simulation requires the more restrictive condition of row-sparsity, that is quantum algorithms for Hamiltonian simulation in the query model, in which the input Hamiltonian is accessed via a unitary oracle that computes matrix entries in place, up to fixed precision, typically require the interaction graph of the Hamiltonian matrix to be row sparse and locally computable; secondly, it is necessary that the adjacency matrix, rather than Laplacian is convergent; and finally, it is necessary that any residual error in the sparsified matrix does not cause a catastrophic break-down in the accuracy of the Hamiltonian simulation.\\
\indent We show how each of these problems can be overcome, although we do still make some assumptions, namely that the Hamiltonian is real, that each row of the Hamiltonian is sampled sufficiently frequently in the sparsification method, and that the sparsified Hamiltonian commutes with actual Hamiltonian. In fact, we do not believe that any of these assumptions are necessary, rather that they are simply features of our proofs and that future analysis should reveal that the results hold when these are not made. Nevertheless, even with these assumptions, the analysis in this paper serves the purpose of better connecting classical sparsification with Hamiltonian simulation methods that assume row-sparse matrices, and shows the likely way forward to achieving a full general and unrestricted result.\\
%
%
%
%
%
%
\indent A secondary purpose of this paper concerns the verification of row-sparsity, which is also an important question in its own right, and one which ostensibly would appear to lend itself to being sped up by a quantum algorithm. We show this is indeed the case by proposing two quantum algorithms that can decide whether a given matrix is row-sparse with fewer operations than are required classically. Whilst conceptualised from quite different starting points, both of these quantum algorithms require $\mathcal{O}(n^{3/2})$ operations, compared to $\mathcal{O}(n^2)$ classically (for a $n\times n$ matrix), and this coincidence of computational complexity raises two intriguing possibilities: firstly, it may be possible to combine the two algorithms in some way to achieve the sparsity verification in $\mathcal{O}(n)$ operations; or conversely, it may be that $\mathcal{O}(n^{3/2})$ is a lower-bound.\\
\indent The remainder of the paper is organised as follows: in Section~\ref{setup} we give precise details of the problem we are going to solve, including analysis of the overall benefits that it brings to Hamiltonian simulation; in Section~\ref{mainsparsity} we give our main results on relating the classical sparsification algorithm to the problem of Hamiltonian simulation; in Section~\ref{maintest} we propose two quantum sparsification verification algorithms; and finally in Section~\ref{disc} we include a wide-ranging discussion covering, amongst other things, the physical meaning of a row-sparse Hamiltonian.

\subsection*{Contributions}

\begin{enumerate}
\item From the literature we note that sparsification can reduce the computational load in several linear algebraic problems, whilst maintaining accuracy. However, we note a discrepancy between classical sparsification algorithms, which typically achieve edge sparsification or dilution, and quantum algorithms for sparse Hamiltonian simulation, which require row sparsity (degree reduction). To bridge this divide, we provide a necessary and sufficient condition for (general) sparsity to imply row sparsity.
    
    \item The classical sparsification method upon which we base our analysis \cite{Spielman2011resistances} provides a bound in terms of the Laplacian, whereas we require one in terms of the adjacency matrix. We therefore prove that the accuracy condition proved for the Laplacian implies that the adjacency matrix is also well-approximated by the sparsified adjacency matrix, when the above row-sparsity condition is met.
    
    \item We then show that this condition on the adjacency matrix being well-approximated \textit{is} sufficient for the Hamiltonian simulation with the sparsified matrix to well-approximate the actual case.
    \item We are also interested in the verification of sparsity, and to this end we propose two quantum algorithms that can verify whether or not a matrix is row sparse in $\mathcal{O}(n^{3/2})$ time, for an $n^2$ matrix, which represents an improvement on the $\mathcal{O}(n^2)$ time required classically.
\end{enumerate}

\section{Setup and problem statement}
\label{setup}
%
%
\indent We base our exposition on the method of spectral sparsification using effective resistance sampling developed in \cite{Spielman2011resistances}. Given a graph $G=(V,E,W)$ with $|V|=n$ vertices, $|E|=m\leq \binom{n}{2}$ edges, and $W$ an $m\times m$ diagonal matrix of edge weights, spectral sparsification generates another graph $G'=(V,E',W')$ such that 
\begin{align}
\label{eq:spectral_approx}
    |E'| \in~ & \O(\frac{n\log n}{\epsilon^2}), \nonumber \\
    (1-\epsilon)L_G &\preceq L_{G'} \preceq (1+\epsilon)L_G,
\end{align}
where $L$ represents the Laplacian matrix, and the second condition holds in the usual partial order on positive semidefinite matrices. The runtime of this classical algorithm is $\O(\frac{m\log n}{\epsilon^2})$.

The simplest process of this kind can be described as sampling from the edge set $E$ according to some probability mass function (pmf) $\curly P$ and hence populating $E'$. If for each $e\in E$ the probability of picking $e$ is $p_e=\curly P(e)$, we can marginalise out the edges and consider the pmf induced on the vertices, defining $p_v = \cleansum_{e\in E : B(e, v)\neq0}p_e$ where $B$ denotes the edge-vertex incidence matrix of the graph.

In this article, we show that the sparsifier generated by a spectral sparsification algorithm which uses sampling methods has under certain conditions, with high probability, the additional property that it is row-sparse, i.e., that the maximum degree of the sparsifier grows only as $\O(\polylog(n))$.\\



\section{Relating sparsification to efficient Hamiltonian simulation}
\label{mainsparsity}


In this section we prove three results that allow us to make the connection between classical sparsification and efficient Hamiltonian simulation in the query complexity model: firstly, we prove a necessary and sufficient condition for sparsity to imply row sparsity; secondly, we show that the asymptotic convergence of the Laplacian proven by \cite{Spielman2011resistances} is sufficient for the asymptotic convergence of the adjacency matrices that we require; and finally we show that the propagation of error from the sparsification process into the sparsified Hamiltonian in the sense described in the sparsification bounds does not lead to a significant increase in error in the Hamiltonian simulation.


\subsection{A necessary and sufficient condition for row sparsity}

A matrix is row-sparse if the number of non-zero entries in a row (for all rows) grows as poly-log of the size of the matrix (at most). Clearly row-sparsity implies sparsity, but the opposite does not apply in general, for example a star network has a sparse adjacency matrix, but each element of the row corresponding to the point of the star is one (except for the leading diagonal entry), and therefore is obviously not row sparse.\\
\indent In the effective resistance Hamiltonian sparsification method, each edge of a adjacency graph corresponding to the Hamiltonian is chosen with a certain probability, and a defined number of i.i.d. samples are drawn (that grows at most with $n (\textnormal{poly} \log(n))$), to construct the sparsified Hamiltonian. Marginalising over the the edges incident to each node as described in Section~\ref{setup}, to get a vertex selection probability distribution\footnote{Note that each edge is connected to two vertices, vertex selection is not mutually exclusive so this will sum to more than one -- a property later used in upper-bounding.}, it is obvious that a necessary condition for row-sparsity is that no vertex is selected with probability growing faster than $(\textnormal{poly} \log(n) )/n$. We now show that this necessary condition is also sufficient, in both obvious senses of asymptotic statistical row sparsity:
\begin{enumerate}
    \item As $n \to \infty$ the probability that any row has a number of elements greater than $\O(\log^{b'} n)$ tends to zero, for some $b'$.
    \item As $n \to \infty$ the expected maximum number of non-zero elements (across the $n$ rows of the matrix) grows as $\mathcal{O}( \log^{b}n)$ for some $b$.
\end{enumerate}
Hereafter these are termed the first and second properties of asymptotic row sparsity.

\begin{proposition}
\label{prop1}
For a Hamiltonian sparsified by the effective resistance method, if no row is selected with probability greater than $\log^b(n) /n$, then it is satisfies the first row sparsity property.
\end{proposition}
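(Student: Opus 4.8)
The plan is to reduce the statement to a binomial tail bound followed by a union bound over the $n$ rows. First I would observe that the number of non-zero off-diagonal entries in the row of the sparsifier indexed by a vertex $v$ equals the degree of $v$ in $G'$, and that this degree is at most the number of draws, out of the total $M$ i.i.d.\ samples, that select an edge incident to $v$. Since each draw picks an edge $e$ independently with probability $p_e$, the event that a given draw is incident to $v$ has probability exactly $p_v = \cleansum_{e\,:\,B(e,v)\neq 0} p_e$, so the hit-count $X_v$ is distributed as $\mathrm{Binomial}(M,p_v)$. Repeated selections of the same edge only decrease the degree relative to $X_v$, so bounding $X_v$ suffices.

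Next I would control the mean. By the hypothesis $p_v \le \log^b(n)/n$ together with the fact that the sample count grows as $M \in \O(n\,\polylog n)$, the expectation $\mu_v = M p_v$ is poly-logarithmic in $n$, say $\mu_v \le \O(\log^c n)$ with $c$ equal to $b$ plus the sampling exponent (from \cite{Spielman2011resistances}, $M = \O(n\log n/\epsilon^2)$, so $c = b+1$ at fixed $\epsilon$). I would then apply the multiplicative Chernoff bound in its large-deviation form, $\Pr[X_v \ge T] \le (e\mu_v/T)^{T}$ for $T > \mu_v$, to estimate the probability that a single row exceeds a threshold $T = \log^{b'} n$.

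Finally I would close with a union bound, $\Pr[\,\exists v:\; X_v \ge \log^{b'} n\,] \le n\,\max_v \Pr[X_v \ge \log^{b'} n]$, which is valid irrespective of how the $X_v$ are correlated. The key calculation is that the Chernoff exponent $T\ln(T/(e\mu_v))$ behaves like $(b'-c)\,\log^{b'} n\,\ln\ln n$, which dominates $\ln n$ as soon as $b' > c$ (and $c\ge 1$ here, since $M = \Omega(n\log n)$); choosing such a $b'$, e.g.\ one more than the sampling-plus-$b$ exponent, drives the union-bounded probability to zero, which is exactly the first asymptotic row-sparsity property.

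The main obstacle I anticipate is not the concentration itself but the bookkeeping around correlations and the sample count. The counts $X_v$ are not independent across vertices, because each sampled edge touches two endpoints; this is harmless for the union bound, which needs only the marginal law of each $X_v$, but I would state it explicitly to avoid the appearance of requiring joint independence. The more delicate point is pinning down the precise poly-log exponent in $M$ guaranteed by the effective-resistance sampling and propagating it cleanly into the choice of $b'$, so that the final threshold exponent is made explicit in terms of $b$.
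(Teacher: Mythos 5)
Your proposal is correct and takes essentially the same route as the paper's proof: bound the expected per-row hit count by $\O(\log^c n)$ (with $c$ the sum of $b$ and the sampling exponent), apply a multiplicative Chernoff tail bound at a poly-log threshold with exponent exceeding $c$, and finish with a union bound over the $n$ rows, whose marginal tails suffice regardless of correlations. Your version differs only in refinements -- the explicit $\mathrm{Binomial}(M,p_v)$ model, the remark that repeated edge selections only decrease degree, and the use of the $(e\mu/T)^T$ form of Chernoff rather than the paper's $P(X\ge R)\le 2^{-R}$ for $R\ge 6\mu$ -- none of which changes the argument.
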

\begin{proof}
Let the total number of samples drawn be $n \log^a n$, therefore the expected number of times that the row with maximum occupation is selected can be upper-bounded:
\begin{equation}
    \mu < \log^b n \log^a n = \log^c n,
\end{equation}
where $c$ is defined as $a+b$.\\
\indent Let $x_i$ be the number of non-zero entries in the $i^{th}$ row, and $X$ be number of non-zero entries in the row with greatest selection probability. By the Chernoff bound \cite{chernoff1952}, for $R \geq 6 \mu$:
\begin{equation}
    P(X \geq R) \leq 2^{-R},
\end{equation}
letting $R = \log^{c+1} n$ (the $c+1$ term is included to ensure that the $>6 \mu$ condition is met for sufficiently large $n$), and noticing that all the other nodes are less likely to be chosen and thus the Union Bound can be invoked:
\begin{align}
    P(\max(x_i) \geq R) \leq & \sum_i p(x_i \geq R) \nonumber \\
    \leq & n 2^{-R} \nonumber \\
    = & n 2^{-\log^{c+1} n},
\end{align}
where this upper-bound also uses the fact that all other rows are chosen with probability less than the row with maximum selection probability. Let $p=n 2^{-\log^{c+1} n}$:
\begin{align}
    \log_2 p & = \log_2 n - \log_2^{c+1} n \nonumber \\ 
    & = \log_2 n (1 - \log^{c} n) \nonumber \\
    & < 0
\end{align}
for sufficiently large $n$, i.e., as $n \to \infty , \, p(X \geq R) \to 0$.
\end{proof}
\begin{proposition}
For a Hamiltonian sparsified by the effective resistance method, if no row is selected with probability greater than $\log^b(n) /n$, then it is satisfies the second row sparsity property.
\end{proposition}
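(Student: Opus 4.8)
The plan is to bound the expected maximum directly via the tail-sum (layer-cake) identity, reusing the Chernoff estimate already established in the proof of Proposition~\ref{prop1}. Writing $M=\max_i x_i$ for the (non-negative, integer-valued) maximum number of non-zero entries across the $n$ rows, I would start from
\begin{equation}
    \Exp{M} = \sum_{R=1}^{\infty} P(M \geq R),
\end{equation}
which holds for any non-negative integer random variable. The key input, carried over verbatim from the previous proof, is the union-plus-Chernoff estimate $P(M \geq R) \leq n\,2^{-R}$, valid whenever $R \geq 6\mu$ with $\mu < \log^c n$ and $c=a+b$.

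Next I would split the sum at the threshold $T=\lceil \log^{c+1} n\rceil$, exactly the value of $R$ used in Proposition~\ref{prop1}. For the low range $1\leq R\leq T$ I bound each probability trivially by $1$, contributing at most $T=\O(\log^{c+1} n)$. For the high range $R>T$ the tail bound applies (since $T$ dominates $6\mu < 6\log^c n$ for large $n$, which is precisely why the extra factor of $\log n$ was inserted into $R$ earlier), and the contribution is controlled by a geometric series,
\begin{equation}
    \sum_{R > T} n\,2^{-R} = n\cdot 2^{-T} \xrightarrow{\,n\to\infty\,} 0,
\end{equation}
where the limit follows from the same calculation $\log_2\!\big(n\,2^{-T}\big)=\log_2 n\,(1-\log^{c} n)\to -\infty$ that closed the first proposition. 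Combining the two ranges gives $\Exp{M} \leq \O(\log^{c+1} n) + o(1) = \O(\log^{c+1} n)$, which is the second row-sparsity property with exponent $b'=c+1=a+b+1$.

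The main obstacle is bookkeeping rather than a genuine difficulty: one must ensure the Chernoff tail bound is valid across the \emph{entire} high range of the summation (i.e.\ that $R\geq 6\mu$ holds for every $R>T$), which is guaranteed by choosing the threshold to be $\log^{c+1} n$ rather than $\log^{c} n$. I would also take care that the $n\,2^{-T}$ term is not merely summable but vanishing, so that the dominant contribution is the trivially-bounded low range of width $T$. No new concentration inequality is needed: once the first property is in hand, the second follows by integrating the same tail estimate.
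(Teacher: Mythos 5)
Your proof is correct, and it reaches the same conclusion by a mildly but genuinely different mechanism than the paper. The paper uses a two-term truncation, $\mathbb{E}(X) \leq R \cdot P(X < R) + (n\log^a n)\cdot P(X \geq R)$, in which the tail contribution is capped by the deterministic maximum possible value of $X$ (the total number of samples, $n\log^a n$) and the Chernoff-plus-union estimate is invoked only at the single threshold $R = \log^{c+1} n$. You instead integrate the tail via the layer-cake identity and sum a geometric series, which buys you two things: you never need the a priori cap $X \leq n\log^a n$, so your argument would survive in settings where no deterministic bound on the maximum is available; and the only bookkeeping obligation is that the tail bound $P(M \geq R) \leq n\,2^{-R}$ hold uniformly on the whole range $R > T$, which you correctly verify since $T = \lceil \log^{c+1} n \rceil$ eventually dominates $6\mu < 6\log^c n$. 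The paper's version is marginally more elementary (one application of the tail bound instead of an infinite sum); yours is slightly more robust and self-contained. One incidental point in your favour: your final exponent $b = c+1$ is what the argument genuinely delivers. The paper's bound of $1 \times \log^c n$ for the first term is a slip --- with $R = \log^{c+1} n$ and $P(X < R) \leq 1$ that term is $\log^{c+1} n$ --- so its stated conclusion $\mathbb{E}(X) \in \O(\log^c n)$ should also read $\O(\log^{c+1} n)$; this is immaterial to the proposition, which only requires poly-logarithmic growth for some exponent.
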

\begin{proof}
The proof is similar to that of Proposition~\ref{prop1}, and the same symbols are used. The proposition considers the expectation of the number of non-zero entries in the row with the most non-zero entries (note that this need not be the the row with highest selection probability). This can be upper-bounded:
\begin{align}
    \mathbb{E}(X) \leq & R \times p(X<R) +(n \log^a n ) \times p(X \geq R) \nonumber \\
    \label{eqref10}
    \leq & 1 \times \log^c n + n^2 \log^a n 2^{-\log^{c+1}n},
\end{align}
i.e., where the second term in the first line uses the fact that, if the maximally chosen row is chosen more than $R$ times, it can still only be chosen a total of $n \log^a n$ times, as this is the number of samples. The first term in the RHS of Eq.~ (\ref{eqref10}) is clearly $\in\O( \textnormal{poly} \log(n))$, letting $y$ equal the second term in the RHS of Eq.~(\ref{eqref10}), i.e.\
\begin{align}
    y = & n^2 \log^a n 2^{-\log^{c+1}n} \nonumber \\
    \implies \log y = & 2 \log n + a \log \log n - \log^{c+1} n \nonumber \\
    = & \log n (2- \log^{c+1} n) + a \log\log n,
\end{align}
clearly, as $n \to \infty, \, \log y \to - \infty$, therefore $y \in \mathcal{O}(1)$. So it follows that $\mathbb{E}(X) \in \mathcal{O}(\log^c n)$.
\end{proof}


\subsection{Laplacian convergence implies adjacency matrix convergence when row sparse}
\label{lap}

From \cite{Spielman2011resistances}, we have that a weighted graph with Laplacian $L$ is sparsfied to a graph with Laplacian $\tilde{L}$, such that the following condition is met:
\begin{equation}
\label{neweq10}
    (1-\epsilon)\mathbf{x}^T L \mathbf{x} \,\,\, \leq \,\,\, \mathbf{x}^T \tilde{L} \mathbf{x} \,\,\, \leq \,\,\, (1+\epsilon)\mathbf{x}^T L \mathbf{x},
\end{equation}
for all vectors $x$ and $1/\sqrt{n} < \epsilon \leq 1$. \\
\indent We wish to express a similar condition for the adjacency matrix, $A = D - L$, where $D$ is a diagonal matrix where each element is the sum of the elements in that row of the adjacency matrix. Our method to do this relies on the property $\mathbb{E}(D_{ii}) = \mathbb{E}(\tilde{D}_{ii})$ for all $i$ --  that is that the diagonal elements of the sparsified Laplacian are expected to be the same as those of the actual Laplacian (this can be seen in the analysis in \cite{Spielman2011resistances}). Additionally, we must assume that each row is expected to be selected at least $\log^{\alpha}n$ times, where $\alpha>1$. To an extent it is valid to criticise such a condition as unnecessarily restrictive, however we do expect the condition in Theorem~\ref{newthm} (or another very similar one) to hold even if this were not to be the case, albeit requiring a very different proof. The inclusion of this condition is therefore for reasons of exposition -- it enables us to use a similar proof to the others in this section, and suffices to demonstrate the principle. Moreover, in Section~\ref{disc} we discuss the physicality of such a restriction. With these restrictions in place, we can give the result as a theorem:
\begin{thm}
\label{newthm}
\begin{equation}
\label{neweq20}
   \mathbf{x}^T A \mathbf{x} -  \epsilon' \mathbf{x}^T \mathbf{x} \max_{\substack{i}} (D_{ii}) \,\,\, \leq \,\,\, \mathbf{x}^T \tilde{A} \mathbf{x} \,\,\, \leq\,\,\,  \mathbf{x}^T A \mathbf{x} +  \epsilon' \mathbf{x}^T \mathbf{x} \max_{\substack{i}} (D_{ii}),
\end{equation}
for some $\epsilon' \in \omega(1/ \log^{\sqrt{\alpha - 1}} n)$.
\end{thm}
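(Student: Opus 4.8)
The plan is to reduce the statement about adjacency matrices to the given Laplacian bound \eqref{neweq10} plus a concentration estimate for the diagonal degree matrix. Since $A=D-L$ and $\tilde A=\tilde D-\tilde L$, for every vector $\mathbf{x}$ one has
\begin{equation}
\mathbf{x}^T\tilde A\mathbf{x}-\mathbf{x}^T A\mathbf{x}=\bigl(\mathbf{x}^T\tilde D\mathbf{x}-\mathbf{x}^T D\mathbf{x}\bigr)-\bigl(\mathbf{x}^T\tilde L\mathbf{x}-\mathbf{x}^T L\mathbf{x}\bigr).
\end{equation}
The Laplacian bracket is immediately controlled: \eqref{neweq10} gives $\lvert\mathbf{x}^T\tilde L\mathbf{x}-\mathbf{x}^T L\mathbf{x}\rvert\le\epsilon\,\mathbf{x}^T L\mathbf{x}$, and since $L$ is positive semidefinite with $\lambda_{\max}(L)=\O(\max_i D_{ii})$ (a Gershgorin-type bound), this bracket is at most $\O(\epsilon)\max_i(D_{ii})\,\mathbf{x}^T\mathbf{x}$. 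It remains to bound the diagonal bracket by a comparable quantity.

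For the diagonal bracket I would pass to a worst-row estimate, $\lvert\mathbf{x}^T(\tilde D-D)\mathbf{x}\rvert=\lvert\sum_i x_i^2(\tilde D_{ii}-D_{ii})\rvert\le\max_i\lvert\tilde D_{ii}-D_{ii}\rvert\,\mathbf{x}^T\mathbf{x}$. Invoking the assumption $\mathbb{E}(\tilde D_{ii})=D_{ii}$, it then suffices to show $\max_i\lvert\tilde D_{ii}-D_{ii}\rvert\le\delta\max_i D_{ii}$ with high probability for a suitable relative deviation $\delta$. Each $\tilde D_{ii}$ is a reweighted sum over the sampled edges incident to vertex $i$, whose expected number is at least $\mu\ge\log^\alpha n$ by hypothesis, so this is precisely the regime of the Chernoff-plus-union-bound argument used in Proposition~\ref{prop1}: applying a multiplicative Chernoff bound per row and a union bound over the $n$ rows yields a failure probability $\sim n\exp(-\delta^2\mu/3)$, which vanishes as long as $\delta^2\log^\alpha n$ dominates $\log n$. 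Setting $\epsilon':=\O(\epsilon)+\delta$ and noting that \eqref{neweq10} allows $\epsilon$ as small as $\Theta(1/\sqrt n)$ so that $\delta$ is the dominant term, both brackets are absorbed into a single relative error against $\max_i D_{ii}$, giving \eqref{neweq20}.

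The main obstacle is pinning down the admissible scaling of $\delta$, and hence $\epsilon'$, as a function of $\alpha$. A direct Chernoff-union computation suggests a relative deviation decaying like $1/\log^{(\alpha-1)/2} n$; recovering the stated rate $\epsilon'\in\omega(1/\log^{\sqrt{\alpha-1}} n)$ requires the specific choice of deviation threshold (mirroring the choice $R=\log^{c+1}n$ in Proposition~\ref{prop1}) together with the validity window $R\ge 6\mu$ of that Chernoff form, balanced against the union bound over all $n$ rows. A secondary subtlety is that the diagonal and Laplacian fluctuations are driven by the same edge samples and are therefore not independent, so some care is needed to control both brackets simultaneously on one high-probability event rather than merely in expectation.
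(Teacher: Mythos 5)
Your proposal is correct and takes essentially the same route as the paper: substitute $A=D-L$, $\tilde A=\tilde D-\tilde L$ into Eq.~(\ref{neweq10}), reduce the diagonal discrepancy to $\max_i\lvert\tilde D_{ii}-D_{ii}\rvert\,\mathbf{x}^T\mathbf{x}$, control it by a per-row multiplicative Chernoff bound (using $\min_i D_{ii}\ge\log^\alpha n$) together with a union bound over the $n$ rows, and set $\epsilon'=\epsilon+\tilde\epsilon$ with the Chernoff term dominating since $\epsilon$ may be taken as small as $\Theta(1/\sqrt n)$. The calibration you flag as the main obstacle resolves exactly as you compute: the union bound requires $\tilde\epsilon^2\log^\alpha n\gg\log n$, i.e.\ $\tilde\epsilon\in\omega(1/\log^{(\alpha-1)/2}n)$, which matches the paper's own Chernoff--union computation (its stated exponent $\sqrt{\alpha-1}$ differs from $(\alpha-1)/2$, but since the bound in Eq.~(\ref{neweq20}) only weakens as $\epsilon'$ grows, any admissible $\tilde\epsilon$ can be enlarged to lie in the stated class, so nothing in your argument needs to change).
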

\begin{proof} We start by substituting $A = D - L$ and $\tilde{A} = \tilde{D} - \tilde{L}$ into Eq.~ (\ref{neweq10}), and rearranging:
\begin{eqnarray}
    (1-\epsilon)\mathbf{x}^T L \mathbf{x}  \leq & \mathbf{x}^T \tilde{L} \mathbf{x} & \leq  (1+\epsilon)\mathbf{x}^T L \mathbf{x} \nonumber \\
    (1-\epsilon)\mathbf{x}^T (D-A) \mathbf{x}  \leq & \mathbf{x}^T (\tilde{D}-\tilde{A}) \mathbf{x} & \leq  (1+\epsilon)\mathbf{x}^T (D-A) \mathbf{x} \nonumber \\
    (1-\epsilon)\mathbf{x}^T D \mathbf{x} - (1-\epsilon)\mathbf{x}^T A \mathbf{x}  \leq & \mathbf{x}^T \tilde{D} \mathbf{x} - \mathbf{x}^T \tilde{A} \mathbf{x} & \leq  (1+\epsilon)\mathbf{x}^T D \mathbf{x} - (1+\epsilon)\mathbf{x}^T A \mathbf{x} \nonumber \\
    (1-\epsilon)\mathbf{x}^T A \mathbf{x} - (1-\epsilon)\mathbf{x}^T D \mathbf{x} \geq & \mathbf{x}^T \tilde{A} \mathbf{x} - \mathbf{x}^T \tilde{D} \mathbf{x}  & \geq  (1+\epsilon)\mathbf{x}^T A \mathbf{x} - (1+\epsilon)\mathbf{x}^T D \mathbf{x}\nonumber \\
    \mathbf{x}^T A \mathbf{x} - (1-\epsilon)\mathbf{x}^T D \mathbf{x} \geq & \mathbf{x}^T \tilde{A} \mathbf{x} - \mathbf{x}^T \tilde{D} \mathbf{x}  & \geq  \mathbf{x}^T A \mathbf{x} - (1+\epsilon)\mathbf{x}^T D \mathbf{x}\nonumber \\
    \mathbf{x}^T A \mathbf{x} + \epsilon \mathbf{x}^T D \mathbf{x} + \mathbf{x}^T (\tilde{D}-D) \mathbf{x} \geq & \mathbf{x}^T \tilde{A} \mathbf{x}   & \geq  \mathbf{x}^T A \mathbf{x} - \epsilon\mathbf{x}^T D \mathbf{x} - \mathbf{x}^T (D-\tilde{D}) \mathbf{x} \nonumber \\
    \mathbf{x}^T A \mathbf{x} + \mathbf{x}^T \mathbf{x} (\epsilon + \tilde{\epsilon}) \max_{\substack{i}} (D_{ii}) \geq & \mathbf{x}^T \tilde{A} \mathbf{x}   & \geq  \mathbf{x}^T A \mathbf{x} -\mathbf{x}^T \mathbf{x} (\epsilon + \tilde{\epsilon}) \max_{\substack{i}} (D_{ii}) ,
\end{eqnarray}
where we define $\forall i \, (1+ \tilde{\epsilon}) D_{ii} \leq \tilde{D}_{ii} \leq (1+ \tilde{\epsilon}) D_{ii}$, which we address using the Chernoff bound. Addressing the upper limit of $\tilde{D}_{ii}$, we have that:
\begin{align}
    \forall i \, P(\tilde{D}_{ii} \geq (1+ \tilde{\epsilon}) D_{ii} ) & \leq \exp\left( \frac{-\tilde{\epsilon}^2  D_{ii}}{3} \right) \nonumber \\
    & \leq \exp\left( \frac{-\tilde{\epsilon}^2 \min_{\substack{i}}( D_{ii})}{3} \right) \nonumber \\
    & = \exp\left( \frac{-\tilde{\epsilon}^2 \log^\alpha n}{3} \right),
\end{align}
using the Union bound, we have that
\begin{equation}
    P(\max_{\substack{i}} (\tilde{D}_{ii} - D_{ii}) >  \tilde{\epsilon} D_{ii} ) \leq n \exp\left( \frac{-\tilde{\epsilon}^2 \log^\alpha n}{3} \right).
\end{equation}
Likewise for the lower-bound on $\tilde{D}_{ii}$ we have that:
\begin{align}
    \forall i \, P(\tilde{D}_{ii} \leq (1 - \tilde{\epsilon}) D_{ii} ) & \leq \exp\left( \frac{-\tilde{\epsilon}^2  D_{ii}}{2} \right) \nonumber \\
    & \leq \exp\left( \frac{-\tilde{\epsilon}^2 \min_{\substack{i}}( D_{ii})}{2} \right) \nonumber \\
    & = \exp\left( \frac{-\tilde{\epsilon}^2 \log^\alpha n}{2} \right),
\end{align}
and again using the Union bound: 
\begin{equation}
    P(\max_{\substack{i}} ( D_{ii} -\tilde{D}_{ii}) >  \tilde{\epsilon} D_{ii} ) \leq n \exp\left( \frac{-\tilde{\epsilon}^2 \log^\alpha n}{2} \right).
\end{equation}
So we can see that the condition $\tilde{\epsilon} \in \omega(1/ \log^{\sqrt{\alpha - 1}} n)$ must hold in order for the closeness to hold asymptotically -- i.e., as $n \to \infty$ the probability that we have a `good' sparsifier tends to one. This dominates the condition $1/ \sqrt{n} < \epsilon \leq 1$, and so we can set $\epsilon' = \epsilon + \tilde{\epsilon}$ to complete the proof.
\end{proof}

\subsection{Error propagation: from spectral sparsification to hamiltonian simulation}
It remains to be shown that the fact that the adjacency matrix is well approximated by its sparsified version implies that the Hamiltonian simulation will also be well approximated when the sparsified version is used. To do so, we express Eq.~(\ref{neweq20}) in slightly different (but equivalent) terms, namely that we are given a spectral approximation $\tilde{H}$ of a Hamiltonian, $H$ satisfying 
\begin{equation}
\label{eq:spectral_approx_H}
    (1-\epsilon')H \preceq \tilde{H} \preceq (1+\epsilon')H.
\end{equation}
Consider the following
\begin{align}
    \Norm{\left( \tilde U - U\right)\ket \psi} & := \braket{\psi|\left( \tilde{U} - U\right)^\dagger \left( \tilde U - U\right)|\psi} \nonumber\\
    &= \braket{\psi|\id+\id-\tilde U^\dagger U-U^\dagger \tilde U|\psi},
\end{align}
where $\tilde U := U(\tilde{H},t) $ and $U := U(H,t)$ for convenience. Let us first take the simple case when $H$ and $\tilde H$ commute; then $\tilde U^\dagger U = e^{-it(H-\tilde H)}$, and so $\tilde U^\dagger U + U^\dagger\tilde U = 2\cos\left(t\Delta H\right)$
where $\Delta H := H - \tilde H $. Now we can write
\begin{align}
    \Norm{\left( \tilde U - U\right)\ket \psi} :&=2\braket{\psi|\left(\id-\cos\left(t\Delta H\right)\right) |\psi}\nonumber\\
    &=t^2\epsilon'^2\langle (\Delta H)^2\rangle_\psi+ \O\left((\epsilon't\Norm{H})^4\right)\nonumber\\
    &\leq \epsilon'^2\cdot(t\Norm{H})^2+ \O\left((\epsilon't\Norm{H})^4\right),
\end{align}
where Eq.~\eqref{eq:spectral_approx_H} gives $-\epsilon' H \preceq \Delta H \preceq \epsilon' H$, which also implies $0 \leq \Norm{(\Delta H)^2} \leq \epsilon'^2 \Norm{H^2} = \epsilon'^2 \Norm{H}^2$. In the last last line we bound $\langle H\rangle_\psi$, the expectation value of the Hamiltonian $H$ under the state $\psi$, by the maximum value the energy can take, given by the spectral norm of $H$.

When $H$ and $\tilde{H}$ do not commute, the first order Suzuki-Trotter formula suggests that $U^\dagger U = e^{-it(H-\tilde H)}$ up to an error of order $\O(t^2\norm{H})$. The error term in the sum $\tilde U^\dagger U + U^\dagger \tilde U$ is then third order in $t\norm{H}$. Thus the above analysis still holds for small times $t=\O(\epsilon'^{2/3})$. However, for longer evolution times, a more detailed error analysis is required, and we are currently investigating this.

Plugging $\tilde H$ into a Hamiltonian simulation algorithm (e.g.\ \cite{Low2017OptimalProcessing,Gilyen2018QuantumArithmetics}) results in a circuit that will approximate evolution under $\tilde H$ to some desired precision $\epsilon$ through a unitary circuit $\tilde U'$. Noting that $\Norm{U-\tilde U'}\leq \Norm{U-\tilde U} + \Norm{\tilde U-\tilde U'}$, we see that $\tilde H$ can be used to approximate Hamiltonian evolution under $H$.

\subsection*{Runtime overhead:} From the above analysis of error propagation, it is clear that choosing 
\begin{equation}
    \label{eq:eps_choice}
    \epsilon'=\O\left(\frac{\sqrt{\epsilon}}{t\Norm{H}}\right)
\end{equation}
ensures that $\Norm{U-\tilde U}\leq \epsilon$ to first order. Putting this back into the runtime expression for the spectral sparsification algorithm of \cite{Spielman2011resistances}, we can estimate the (one-time) classical runtime overhead required in order to use sparse Hamiltonian simulation for time $t$, which is given by
\begin{equation}
    \label{eq:classical_overhead}
    \O\left(\frac{mt\Norm{H}\log n}{\epsilon}\right).
\end{equation}
The presence of the spectral norm $\norm{H}$ is to be expected as it sets the energy scales for the problem; evidently this method is only useful if $\Norm{H}=\O(\text{poly}(n))$. We expect this to be true for several systems of physical significance, e.g.\ moleular Hamiltonians, which typically have $\O(n^4)$ terms in a tensor product of Pauli basis (expecting the coupling constants also to scale polynomially for most common molecules).

\section{Sparsity testing}
\label{maintest}
Testing if an input function or vector is sparse is a problem that has recently received some attention in the context of big data and machine learning algorithms  \cite{Barman18SparsityTesting,Gopalan11sparsityTesting}. Could a quantum algorithm for sparsity testing offer any advantages? Given an oracle that computes matrix entries in place, we could use a comparison oracle to flag an ancillary register as `1' wherever there is a zero entry, and then use quantum amplitude estimation on the ancilla to estimate the number of zero entries. We demonstrate two quantum algorithms below for testing row-sparsity of an input matrix.

\subsection{Sparsity testing using quantum amplitude estimation}
Given a matrix $A\in\R^{n\times n}$ (we make this assumption of being real for simplicity of exposition, and to be consistent with the previous analysis, but the following should easily generalise to complex numbers) that we can access via a unitary quantum oracle that computes its entries in place (to some fixed precision), i.e.\
\begin{equation}
\label{eqref10}
  U_A\ket{i}\ket{j}\ket{z} = \ket{i}\ket{j}\ket{z\oplus A_{ij}},
\end{equation}
where $0\le i,j\le n-1$ are the row and column indices, and the third register contains the matrix entry to $p$-bits of precision (so $0\le z\le 2^p$). 

Another oracle that we will use is the comparator
\begin{equation}
    \mathfrak{comp}\ket{a}\ket{b}\ket{0}_{\text{flag}} = \begin{cases}
                                                            \ket{a}\ket{b}\ket{0}_{\text{flag}} & \text{if}~a<b  \\
                                                            \ket{a}\ket{b}\ket{1}_{\text{flag}} & \text{if}~a\ge b,
                                                        \end{cases}
\end{equation}
which can be implemented efficiently using quantum adder circuits \cite{Gidney2017HalvingAddition}.

Let us use the $U_A$ oracle to prepare a superposition over the entries of a chosen row $0\le i\le n-1$
\begin{eqnarray}
    \ket{i}\ket{0}\ket{0}_{\text{data}}&\xrightarrow{\id\otimes H\otimes\id}&\frac{1}{\sqrt{n}}\sum_{j=0}^n\ket{i}\ket{j}\ket{0}_{\text{data}}\nonumber\\
        &\xrightarrow{U_A}&\frac{1}{\sqrt{n}}\sum_{j=0}^n\ket{i}\ket{j}\ket{A_{ij}}_{\text{data}} \nonumber\\
        &=:& \ket{\text{row}_i}
\end{eqnarray}

Then we can adjoin two ancillary registers, $\ket{\delta}_{\text{ref}}\ket{0}_{\text{flag}}$, and using the $\mathfrak{comp}$ oracle we can make the following series of transformations
\begin{align*}
    \ket{\text{row}_i}\ket{\delta}_{\text{ref}}\ket{0}_{\text{flag}}
            \xrightarrow{\id\otimes\id\otimes~\mathfrak{comp}}& \ket{i}\otimes\left(
                            \frac{1}{\sqrt{n}}\sum_{j\in\text{supp}{^\delta}_i(A)}\ket{j}\ket{A_{ij}}_{\text{data}}\ket{\delta}_{\text{ref}}\ket{0}_{\text{flag}}\right. \\
                            &\quad\qquad + \left. \frac{1}{\sqrt{n}}\sum_{j\notin\text{supp}{^\delta}_i(A)}\ket{j}\ket{A_{ij}}_{\text{data}}\ket{\delta}_{\text{ref}}\ket{1}_{\text{flag}}
                            \vphantom{\ket{A_{ij}}}\right), \\
                            & =: \ket{\text{spar}_i}
\end{align*}
where the support of row $i$ of $A$ is $\text{supp}^{\delta}_i(A)=\set{0\le j\le n-1 : |A_{ij}|\geq\delta}$. Here we have assumed 
that the data register contains the magnitude of $A_{ij}$, so that we can just check if it is less than a small threshold $\delta$ in order to check if it is close to zero --- the magnitude can be obtained easily by taking advantage of the signed fixed-point representation of $A_{ij}$ (e.g.\ by simply neglecting the sign bit). Now note that the amplitude of the $\ket{1}_{\text{flag}}$ subspace of the above state is proportional to the sparsity $s(i):=\Abs{\text{supp}^{\delta}_i(A)}$ of row $i$
\begin{align}
    \Norm{\Pi_1^{\text{flag}}\ket{\text{spar}_i}} 
    &= \Norm{\frac{1}{\sqrt{n}}\sum_{j\notin\text{supp}{^\delta}_i(A)}\ket{j}\ket{A_{ij}}_{\text{data}}\ket{\delta}_{\text{ref}}}
    \nonumber\\&=\sqrt{\frac{s(i)}{n}},
\end{align}
where $\Pi_1^{\text{flag}}=\id_r\otimes\id_c\otimes\id_{\text{data}}\otimes\id_{\text{ref}}\otimes\proj{1}_{\text{flag}}$ is a projector onto the flag $1$ subspace. This can be estimated to additive precision $\epsilon'$ using $\Theta\left(1/\epsilon\right)$ queries to $U_A$, using the method of quantum amplitude estimation \cite{Brassard2002QuantumEstimation}, which would give us a quantity $\tilde{x}_i$ satisfying
\[
\Abs{\sqrt{\frac{s(i)}{n}}-\tilde{x}_i}\le \epsilon',
\]
whence we see that choosing $\epsilon'=\epsilon/\sqrt{n}$ gives us an additive approximation of $\sqrt{n}\tilde{x}_i$ of $s(i)$ to precision $\epsilon$, using $\O(\sqrt{n}/\epsilon)$ queries. Following the same procedure to estimate the sparsities of all $n$ rows, the overall sparsity (which for us is the maximum number of non zeros in any row or column) can be ascertained in $\O(n^{3/2})$ queries; we can leave $\epsilon$ out of this consideration since it can be chosen to be of order unity.

\subsection{Sparsity testing using quantum maximum finding}

We still use oracle access as in Eq.~(\ref{eqref10}), and we assume the data register has enough qubits to store the sum of the entries in any row. We start by putting the rows in superposition:

\begin{equation}
    \ket{0}\ket{0}\ket{0}_{\text{data}}\xrightarrow{ H\otimes\id \otimes \id} \frac{1}{\sqrt{n}}\sum_{i=0}^n\ket{i}\ket{0}\ket{0}_{\text{data}} 
\end{equation}

Now we iterate over $n$ calls to the oracle ($j$ is initially 0):
\begin{equation}
    \frac{1}{\sqrt{n}}\sum_{i=0}^{n-1}\ket{i}\ket{j}\ket{\sum_{k=0}^{j-1} A_{ik}}_{\text{data}} \xrightarrow{U_A} \frac{1}{\sqrt{n}}\sum_{i=0}^{n-1}\ket{i}\ket{j}\ket{\sum_{k=0}^{j} A_{ik}}_{\text{data}},
\end{equation}
setting $j \leftarrow j+1$ on each iteration, until $j=n-1$, after which we have the state:
\begin{equation}
    \frac{1}{\sqrt{n}}\sum_{i=0}^{n-1}\ket{i}\ket{n-1}\ket{\sum_{k=0}^{n-1} A_{i,k}}_{\text{data}},
\end{equation}
in which the column register, now in state $\ket{n}$, can be dispensed with. Therefore in $\mathcal{O}(n)$ operations we have created a superposition of the sum of each of the $n$ rows, indexed accordingly. Quantum maximum finding methods \cite{Durr96MaxFind} can make use of this state, preparing it $\O(\sqrt{n})$ times, to find the maximum in a further. Thus we have a quantum algorithm that takes $\mathcal{O}(n^{3/2})$ oracle queries and additional quantum arithmetic operations. By contrast, a classical algorithm to check for row sparsity would have to sum over all rows ($\mathcal{O}(n^2)$ operations) and then classically find the maximum ($\mathcal{O}(n)$ operations). (note that it may be possible to do this slightly faster, but it would still be necessary to check over a number of elements growing linearly with $n$ for each row, and to check all of the $n$ rows).

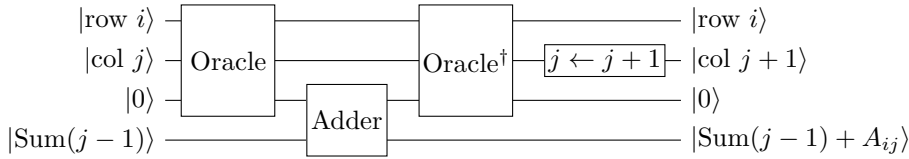
\begin{figure}[!h]
\centering
\begin{tikzpicture}[scale=1.000000,x=1pt,y=1pt]
\filldraw[color=white] (0.000000, -7.500000) rectangle (193.000000, 52.500000);
\draw[color=black] (0.000000,45.000000) -- (193.000000,45.000000);
\draw[color=black] (0.000000,45.000000) node[left] {$|{\text{row}~i}\rangle$};
\draw[color=black] (0.000000,30.000000) -- (193.000000,30.000000);
\draw[color=black] (0.000000,30.000000) node[left] {$|{\text{col}~j}\rangle$};
\draw[color=black] (0.000000,15.000000) -- (193.000000,15.000000);
\draw[color=black] (0.000000,15.000000) node[left] {$|0\rangle$};
\draw[color=black] (0.000000,0.000000) -- (193.000000,0.000000);
\draw[color=black] (0.000000,0.000000) node[left] {$|\text{Sum}(j-1)\rangle$};
\draw (23.500000,45.000000) -- (23.500000,15.000000);
\begin{scope}
\draw[fill=white] (23.500000, 30.000000) +(-45.000000:24.748737pt and 29.698485pt) -- +(45.000000:24.748737pt and 29.698485pt) -- +(135.000000:24.748737pt and 29.698485pt) -- +(225.000000:24.748737pt and 29.698485pt) -- cycle;
\clip (23.500000, 30.000000) +(-45.000000:24.748737pt and 29.698485pt) -- +(45.000000:24.748737pt and 29.698485pt) -- +(135.000000:24.748737pt and 29.698485pt) -- +(225.000000:24.748737pt and 29.698485pt) -- cycle;
\draw (23.500000, 30.000000) node {Oracle};
\end{scope}
\draw (68.000000,15.000000) -- (68.000000,0.000000);
\begin{scope}
\draw[fill=white] (68.000000, 7.500000) +(-45.000000:21.213203pt and 19.091883pt) -- +(45.000000:21.213203pt and 19.091883pt) -- +(135.000000:21.213203pt and 19.091883pt) -- +(225.000000:21.213203pt and 19.091883pt) -- cycle;
\clip (68.000000, 7.500000) +(-45.000000:21.213203pt and 19.091883pt) -- +(45.000000:21.213203pt and 19.091883pt) -- +(135.000000:21.213203pt and 19.091883pt) -- +(225.000000:21.213203pt and 19.091883pt) -- cycle;
\draw (68.000000, 7.500000) node {Adder};
\end{scope}
\draw (112.500000,45.000000) -- (112.500000,15.000000);
\begin{scope}
\draw[fill=white] (112.500000, 30.000000) +(-45.000000:24.748737pt and 29.698485pt) -- +(45.000000:24.748737pt and 29.698485pt) -- +(135.000000:24.748737pt and 29.698485pt) -- +(225.000000:24.748737pt and 29.698485pt) -- cycle;
\clip (112.500000, 30.000000) +(-45.000000:24.748737pt and 29.698485pt) -- +(45.000000:24.748737pt and 29.698485pt) -- +(135.000000:24.748737pt and 29.698485pt) -- +(225.000000:24.748737pt and 29.698485pt) -- cycle;
\draw (112.500000, 30.000000) node {Oracle$^\dagger$};
\end{scope}
\begin{scope}
\draw[fill=white] (164.500000, 30.000000) +(-45.000000:31.819805pt and 8.485281pt) -- +(45.000000:31.819805pt and 8.485281pt) -- +(135.000000:31.819805pt and 8.485281pt) -- +(225.000000:31.819805pt and 8.485281pt) -- cycle;
\clip (164.500000, 30.000000) +(-45.000000:31.819805pt and 8.485281pt) -- +(45.000000:31.819805pt and 8.485281pt) -- +(135.000000:31.819805pt and 8.485281pt) -- +(225.000000:31.819805pt and 8.485281pt) -- cycle;
\draw (164.500000, 30.000000) node {$j\leftarrow j+1$};
\end{scope}
\draw[color=black] (193.000000,45.000000) node[right] {$|{\text{row}~i}\rangle$};
\draw[color=black] (193.000000,30.000000) node[right] {$|\text{col}~j+1\rangle$};
\draw[color=black] (193.000000,15.000000) node[right] {$|0\rangle$};
\draw[color=black] (193.000000,0.000000) node[right] {$|\text{Sum}(j-1)+A_{ij}\rangle$};
\end{tikzpicture}
\captionsetup{width=0.95\linewidth}
\caption{\small{Circuit portion showing how the sum of $A_{ij}$ can be obtained inside the ket. The oracle loads the matrix entries $A_{ij}$ into the third register, and $\text{Sum}(j)=\sum_{k=0}^j A_{ij}$.}}
\label{f:circuit}
\end{figure}

We remark that the above algorithm that uses quantum maximum finding appears to rely on $A$ being a binary adjacency matrix. When an upper bound $\Lambda$ on $\Norm{A}_{max}$ is available, this limitation can be overcome by normalising the matrix entries by $\Lambda$.






\section{Discussion}
\label{disc}

In this paper we have shown how classical sparsifying techniques can be used as a preprocessing step to obtain a row computable sparse input matrix that can then be used with efficient quantum algorithms for sparse Hamiltonian simulation. The one-time $\O(\frac{m\log n}{\epsilon^2})$ classical overhead in runtime may be justified by the fact that the sparsified output matrix may be used for multiple applications (e.g.\ for simulating time evolution of several different states) which can be efficiently performed quantumly.

Usually we require every problem instance to be row-sparse in a quantum algorithm. What we have so far, using spectral sparsification, is a guarantee that as $n$ grows, the sparsified output is also row sparse with high probability. Therefore it is also necessary that the simulation has some sort of checking mechanism, such that the simulation is halted if too many iterations have been required (i.e., because the actual sparsified Hamiltonian that was generated was not, in fact, row sparse), and to start again with a fresh sparsified Hamiltonian. This should be easy to include in any implementation, and the first and second properties of row sparsity are sufficient in this case to guarantee good overall performance (that is, as $n \to \infty$ the probability of needing to start again vanishes).\\
\indent On a more general note, it is interesting to consider Hamiltonian sparsification in the context of a suite of simulation algorithms. For example, we have identified that star graphs are sparse, but not row sparse -- thus we can see that physical systems that are dominated by a few components may yield sparsified Hamiltonian's that are essentially a superposition of a number of star graphs. Thus, whilst the techniques presented in this paper will not apply, it may be possible to use other techniques such as low-rank approximations. Conversely, for physical systems in which a number of components barely make any impact on the whole (i.e., they have few and/or low-weight edges to other vertices), then it is likely to be safe to simply neglect these components. Informally, this can be seen as a justification for assuming that each row was sampled at least $\textnormal{poly} \log$ many times, as in Section~\ref{lap}.
\subsection*{Open problems}
Finally, it is worth discussing the deficiencies of this paper -- as identified in the introduction, we make three assumptions in the analysis: that the Hamiltonian is real, that its rows are all sampled at least $\textnormal{poly} \log$ many times, and that it commutes with its sparsifier. The first two of these essentially restrict the physical application of our work, and it would therefore be beneficial to show that the same results hold when these assumptions are removed, as well as tightening the various bounds where possible; the third assumption, however, is more fundamental --  it is important to understand whether a sparsified Hamiltonian commutes with the actual Hamiltonian in general, and if not whether the discrepancy can be shown to be insignificant when the full simulation is analysed (for example by using Trotter formulas / the BCH expansion to quantify errors). However, such a question seems to be of more general relevance than simply to plug a gap in our analysis: the condition given in Eq.~(\ref{eq:spectral_approx_H}) seems to be an eminently reasonable general measure of approximation accuracy, which may be used for myriad Hamiltonian approximation methods, and it is therefore important to show that it does indeed lead to accurate Hamiltonian simulation.

    


\clearpage
\bibliographystyle{unsrt}





\end{document}